\title[Eigenvalue variations and semiclassical concentration]
{Eigenvalue variations and semiclassical concentration}
\author[L. Hillairet]{Luc Hillairet}
\address{Laboratoire de Math\'ematiques Jean Leray \\
UMR CNRS 6629-Universit\'{e} de Nantes, 2 rue de la Houssini\`{e}re, \\
BP 92 208, F-44 322 Nantes Cedex 3, France}
\email{Luc.Hillairet@math.univ-nantes.fr}
\newtheorem{thm}{Theorem}[section]
\newtheorem{coro}[thm]{Corollary}
\newtheorem{lem}[thm]{Lemma}
\newtheorem{prop}[thm]{Proposition}
\theoremstyle{definition}
\newtheorem{defn}[thm]{Definition}
\theoremstyle{remark}
\newtheorem{remk}[thm]{Remark}
\newcommand{\eps}{\varepsilon}
\newcommand{\alp}{\alpha}
\newcommand{\W}{{\mathcal{W}}}
\newcommand{\V}{{\mathcal{V}}}
\newcommand{\U}{{\mathcal{U}}}
\newcommand{\R}{{\mathbb{R}}}
\newcommand{\refeq}[1]{(\ref{#1})}
\newcommand{\und}{\frac{1}{2}}
\newcommand{\Id}{\mbox{Id}}
\newcommand{\op}{\mbox{Op}^+}
\newcommand{\supp}{\mbox{supp.}}
\begin{document}

\maketitle
\section{Introduction}
In this paper, we aim at expliciting some relations between 
the behaviour of analytic eigenvalue branches of 
a spectral problem and concentration properties of its 
eigenfunctions. In order to illustrate this we consider a 
semiclassical Schr\"odinger operator $-h^2 \Delta +V$ on $\R^d$ 
(in which $\Delta$ is the non-positive Euclidean Laplace operator). 
Viewing $h$ as an analytic parameter, this problem enters 
the usual perturbation theory and the eigenvalues 
are organised into analytic eigenbranches $E_j(h).$ 
It is then natural to address the behaviour of these eigenbranches 
when $h$ goes to $0$. This point of view 
is fundamentally different from the usual semiclassical perspective. 
Indeed, any analytic eigenbranch is expected to cross a typical 
energy window $[E_1,E_2]$ so that any semiclassical 
expansion in this window corresponds actually to different 
eigenbranches. Our point is that knowing how the  
eigenfunctions concentrate in the semiclassical regime yields
some result on the behaviour of eigenbranches. For instance, 
we have the following theorem (see section \ref{SecSchro}). 

\begin{thm}
Let $E_j(h)$ be an analytic eigenbranch of the family of operators 
$-h^2\Delta+V$ on $L^2(\R^d).$ The following then holds.
\begin{enumerate}
\item The eigenbranch $E_j(h)$ converges to some limit $E_0$ when $h$ tends to $0$. 
\item The limit $E_0$ is necessarily a critical value of the potential $V.$
\end{enumerate}
\end{thm}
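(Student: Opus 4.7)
The plan is to exploit analyticity of the eigenbranch in two complementary ways: first, through the Hellmann--Feynman formula, which gives monotonicity of $E_j(h)$ in $h$ and hence convergence; second, through semiclassical concentration of the corresponding eigenfunctions, combined with an integrability estimate on $E_j'$ coming from that same monotonicity, to force $E_0$ to be a critical value of $V$.

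For (1), let $\psi_j(h)$ be an $L^2$-normalised eigenfunction associated with $E_j(h)$. Since $\partial_h(-h^2\Delta+V) = -2h\Delta$ and $-\Delta\ge 0$, the Hellmann--Feynman formula gives
$$E_j'(h) = \langle -2h\Delta\psi_j(h),\psi_j(h)\rangle = 2h\,\|\nabla\psi_j(h)\|^2 \ge 0,$$
so $E_j$ is non-decreasing in $h$. The identity $E_j(h) = h^2\|\nabla\psi_j(h)\|^2 + \int V|\psi_j(h)|^2$ yields $E_j(h)\ge \inf V$. Monotonicity plus this lower bound forces $E_j(h)$ to decrease to a finite limit $E_0\in[\inf V,+\infty)$ as $h\to 0^+$.

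For (2), I would extract a semiclassical (Wigner) measure $\mu$ on $T^*\R^d$ from $(\psi_j(h_n))_n$ along some sequence $h_n\to 0^+$; the standard commutator/symbolic calculus argument shows that $\mu$ is a positive measure supported on the energy shell $\{|\xi|^2+V(x)=E_0\}$ and invariant under the Hamilton flow of $p(x,\xi)=|\xi|^2+V(x)$. The extra input from analyticity is the integrability of $E_j'$: indeed $\int_0^{h_0}E_j'(h)\,dh = E_j(h_0)-E_0 < \infty$, so $\int_a^{2a}E_j'(h)\,dh\to 0$ as $a\to 0$. By the mean value theorem for integrals one can choose $h_a\in[a,2a]$ with $a\,E_j'(h_a)=\int_a^{2a}E_j'(h)\,dh$, so $h_a E_j'(h_a)\to 0$, i.e.\ $h_a^2\|\nabla\psi_j(h_a)\|^2\to 0$. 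Refining the subsequence to such $h_a$, one deduces $\int|\xi|^2\,d\mu=0$ and hence $\mathrm{supp}(\mu)\subset\{\xi=0\}$. Invariance of $\mu$ under the flow $(\dot x,\dot\xi) = (2\xi,-\nabla V(x))$ applied to a point $(x,0)$ of the support then forces $\nabla V(x)=0$, while membership in the energy shell forces $V(x)=E_0$, identifying $E_0$ as a critical value.

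The main obstacle I foresee is ensuring that $\mu$ is not identically zero: without some form of confinement on $V$, the mass of $|\psi_j(h_n)|^2$ could escape to infinity and leave $\mu = 0$, rendering the argument empty. I would address this through an Agmon-type exponential decay estimate for $\psi_j(h)$ outside a neighbourhood of $\{V\le E_0\}$; under a mild coercivity assumption on $V$ (so that the classically allowed region at level $E_0+o(1)$ is bounded), this confines $|\psi_j(h)|^2$ to a compact set and preserves mass in the weak limit.
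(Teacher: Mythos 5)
Your proof is correct and follows essentially the same route as the paper: the Hellmann--Feynman identity $E_j'(h)=2h\|\nabla\psi_j(h)\|^2\ge 0$ plus the lower bound $E_j\ge\inf V$ for convergence, and then integrability of $E_j'$ near $0$ combined with a semiclassical measure supported on $\{\xi=0\}\cap\Sigma_{E_0}$ and Hamiltonian-flow invariance to force $E_0$ to be a critical value. The only organizational difference is that you extract a subsequence with $h^2\|\nabla\psi_j(h)\|^2\to 0$ directly via the mean value theorem, whereas the paper argues by contradiction through a lemma giving a uniform lower bound on the kinetic energy when every semiclassical measure satisfies $\int|\xi|^2\,d\mu>0$ (which would make $E_j'\ge 2c/h$ non-integrable); your explicit attention to tightness via Agmon estimates addresses a point the paper handles only implicitly through the standing assumption that $V$ is confining.
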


We will also use the same kind of ideas to prove an 
integrated remainder estimate for an analytic variation of 
metrics on a smooth compact manifold $X.$ (see Thm. \ref{IREthm} 
in section \ref{SecIRE}). 

As we have already pointed out, the study of analytic eigenbranches 
and semiclassical analysis are two different worlds so that such 
a relation between them is {\em a priori} quite surprising. However, 
as it will become clear later, the formula that expresses 
the derivative of any analytic eigenbranch is 
actually a semiclassical quantity.  This is the fact we will be exploiting here. 
This idea is also used in the two recent papers \cite{HJ1} and \cite{Hass}. 
In \cite{HJ1} we deal with the limiting behaviour of 
analytic eigenbranches in a singular setting. 
The main ingredient of the proof consists in controlling certain 
quadratic forms evaluated on the eigenfunctions of the spectral problem 
under consideration. We claim here that such a control may actually 
be interpreted as a condition on the associated semiclassical measures. 
Such an interpretation is not necessarily 
helpful since rather little is known in general about 
semiclassical measures and the overall strategy of \cite{HJ1} 
relies more on direct estimates for solutions of one dimensional ODE's. 
In \cite{Hass}, a part of the proof relies on the fact that 
an assumption on semiclassical measures allows to prove an 
integrated remainder estimate of the form we will consider here. 
Section \ref{SecIRE} can actually be seen as a formalization 
of this argument of \cite{Hass}.    
\hfill \\

\noindent {\textsc{Acknowledgments :}} The ideas in this paper grew out from 
several discussions with Chris Judge and Andrew Hassell, I would like to thank both of them 
for their stimulating influence. I would also like to thank the CRM in Montr\'eal and 
the organizers of the workshop "Spectrum and Dynamics" in April, 2008 and 
the MSRI in Berkeley where part of this work was done. Finally, 
I would like to thank Gilles Carron, Yves Colin de Verdi\`ere, 
Patrick G\'erard and Andrew Hassell who made several comments on the first version 
of this paper that helped me improve the results and the exposition.

\section{quantization and Semiclassical measures}
In this section we will briefly recall some 
known facts about quantization and semiclassical measures. 
We will use both the classical and 
the semiclassical theory of pseudodifferential operators (See \cite{Tay, SjoDim} 
for background on these notions) 

In the case of Schr\"odinger operators in $\R^d$, the principal symbol of a 
($h$-)pseudo\-diffe\-ren\-tial operator is a function in $T^*(\R^d).$ When working on a 
smooth compact Riemannian manifold, we will denote by $S^*X$ the quotient 
of $T^*X \backslash \{ 0\}$ by $(x,\xi)\sim (x,\lambda \xi),~\lambda >0.$
The subsets of $S^*X$ are the conical subsets of $T^*X\backslash \{0 \}.$ 
The symbol of a pseudodifferential operator $A$
is defined by testing $A$ again fast oscillatory functions. It 
is then a function on $T^* M\backslash \{ 0\}$. The symbol of a classical 
pseudodifferential operator is conically invariant and thus defines 
a function on $S^*X.$  

\begin{remk}\label{rksymbol} 
The principal symbol of a pseudodifferential operator 
may also be read off from the Schwartz kernel of $A.$ Observe, however, that 
when performing this identification, one has to pay careful attention 
to the density that is used (especially when dealing with varying Riemannian 
metrics).
\end{remk}

A crucial point is the ability of quantizing any reasonable function 
on $S^*X.$ This is done via a so-called quantization procedure. There 
are several possible choices, but it is useful to choose a 
quantization that respects positivity (see \cite{CdV,HMR})\footnote{
It is actually critical that such a quantization exists}. 
Thus, for any  function $a$ on $S^*X$ 
there exists a pseudodifferential operator $\op(a)$ of order $0$ 
whose principal symbol is $a.$ If $A$ is any other pseudodifferential 
operator with principal symbol $a$ then $\op(a)-A$ is a 
pseudodifferential operator of order $-1.$ 
Moreover if, $a$ is non-negative, then $\op(a)$ is also non-negative, 
as an operator.  

In the semiclassical setting, the homogeneity condition 
on $a$ is dropped and we may quantize any smooth function 
on $T^*(X)$ with compact support. 

Using this quantization, we define the so-called semiclassical measures. 
We present the construction in the semiclassical setting since 
in the classical one we will use a non-concentration estimate rather than 
semiclassical measures. 

We start with a Schr\"odinger operator on $\R^d,$ and we 
consider a sequence of 
normalized eigenfunctions $u_h$ of the equation 
$$
-h^2 \Delta u_h+V(x)u_h\,=\,E_h u_h ,
$$
such that the corresponding eigenvalues $E_h$ converge to 
some non-critical energy $E_0.$ To any 
compactly supported function $a$ the map  
$a\mapsto \langle \op(a)u_h,u_h\rangle$ defines a measure 
that converges weakly (up to extracting a subsequence). Any such limit is 
called a semiclassical measure. 

\begin{remk}
One could in fact associate semiclassical measures to any bounded sequence 
in $L^2.$ In the sequel we will consider only semiclassical measures that 
are associated with eigenfunctions of the Laplace operator.
\end{remk}

Among the basic facts satisfied by any semiclassical measure, we mention 
only the two following properties (see \cite{CdV,HMR})
\begin{enumerate}  
\item Any semiclassical measure is a 
probability measure supported on the energy surface 
$\Sigma_{E_0}:=\{ |\xi|^2+V(x)\,=\,E_0\}.$   
\item Any semiclassical measure is invariant under the Hamiltonian flow 
associated with $|\xi|^2+V(x)$ on $\Sigma_{E_0}.$
\end{enumerate}

\begin{remk}\label{xi2}
Under suitable hypotheses, we may extend the definition 
to smooth functions that are not necessarily 
compactly supported in $T^*\R^d.$ In particular, 
for a sequence of eigenfunctions with energy $E_h$ converging to $E_0$, 
the quantity $\int \|\nabla u_h\|^2 dx $ is uniformly bounded. This allows to prove 
that $\int |\xi|^2 d\mu < + \infty$ and then that 
$$ \int |\xi|^2 d\mu \,=\, \lim_{h\rightarrow 0} \int \|\nabla u_h\|^2\,dx.
$$
\end{remk}
\hfill \\

In the classical setting we can also define semiclassical measures. 
The main difference is that instead of a small parameter 
going to zero, we have to consider a sequence of 
eigenfunctions with eigenvalues growing to infinity and we recover a probability 
measure on $S^*X.$ It can be pointed out that 
the estimate in corollary \ref{LowerCor} that we use in section \ref{SecIRE} is a kind of  
non-concentration estimate. Indeed, such an estimate would give some information on any semiclassical 
measure when passing to the high energy limit. The assumptions on semiclassical 
measure in section \ref{SecSchro} and on geometric control in section \ref{SecIRE} 
are thus very similar in nature.
 
\section{Schr\"odinger operators from the analytic viewpoint}\label{SecSchro}
In this section we consider the semiclassical Schr\"odinger equation in $X\,=\,\R^d$. 
The eigenvalue problem consists in looking for 
eigenfunctions $u_h$ and eigenvalues $E_h$ such that 
\begin{equation}
-h^2 \Delta u_h+V(x)u_h\,=\,E_h u_h ,
\end{equation} 
in which $\Delta$ is the (non-positive) Laplace operator 
associated with the Euclidean metric.
Let us assume that $V$ is smooth and confining (i.e. $\lim_{|x|\rightarrow \infty} V(x)=+\infty$) 
so that there is a complete orthonormal set of eigenfunctions 
and the corresponding eigenvalues grow to infinity. 

This eigenvalue problem may be put into a generalized eigenvalue problem 
in the sense of Kato (cf \cite{Kato}). We change notations by letting $h=t$ 
thus emphasizing that we will now consider an analytic perturbation problem. 
An integration by parts leads to the following equivalent 
problem 
\begin{equation}{\label{GenKato}}
q_t(u_t,v) = E_t \cdot n(u,v),
\end{equation} 
In which $q_t$ is the quadratic form 
$$q_t(u)\,=\,t^ 2 \int |\nabla u(x)|^2 dx+\int V(x)|u(x)|^2 dx$$ defined 
on $\mathcal{D}\,=\,H^1(X)\cap L^2( (1+V(x)^2)^{\und}dx)$ and $n$ is 
the standard Riemannian quadratic form on $L^2(X)$. 
Observe that although $q_t$ is real-analytic for $t\in \R,$ 
$\mathcal{D}$ is the form domain of $q_t$ only for $t\neq 0.$ 
Consequently, usual analytic perturbation theory tells us 
that the eigenvalues organize into real-analytic eigenbranches  
for $t \in (0,1].$ 
A standard question in this setting is to address the limiting 
behaviour of these eigenbranches when $t$ goes down to $0.$ 

\begin{thm}
Let $E_t$ be an eigenvalue branch of \refeq{GenKato}, 
then $E_t$ converges to a limit $E_0$ when $t$ goes to $0.$ 
Moreover, this limit $E_0$ is a critical value of $V.$   
\end{thm}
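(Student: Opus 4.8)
The plan is to exploit the Hellmann--Feynman-type formula for the derivative of an analytic eigenbranch and to reinterpret it as a semiclassical quantity, as announced in the introduction. Writing $E_t$ for a real-analytic eigenbranch with a choice of real-analytic normalized eigenfunctions $u_t$ (so that $n(u_t,u_t)=1$), differentiating $q_t(u_t,u_t)=E_t$ and using that $u_t$ is an eigenfunction gives
\begin{equation*}
\dot{E}_t \,=\, \partial_t q_t (u_t,u_t) \,=\, 2t \int |\nabla u_t|^2\, dx.
\end{equation*}
The right-hand side is manifestly non-negative, so $E_t$ is monotone non-decreasing in $t$ on $(0,1]$. It is also bounded below: since $V$ is confining it is bounded below on $\R^d$, and $q_t(u_t,u_t) \ge \inf V$ for every $t$, hence $E_t \ge \inf V$. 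A monotone bounded function has a limit as $t \to 0^+$, which proves the first assertion. (One should also remark that the limit is finite and that, along any sequence $t=h\to 0$, $u_h$ is a genuine eigenfunction of $-h^2\Delta+V$ with eigenvalue $E_h=E_t \to E_0$, so the machinery of semiclassical measures from Section~2 applies to it.)

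For the second assertion I would argue by contradiction: suppose $E_0$ is \emph{not} a critical value of $V$. Then the construction recalled in Section~2 produces (after extracting a subsequence $h\to 0$) a semiclassical measure $\mu$ associated to the sequence $u_h$, which is a probability measure carried by the energy surface $\Sigma_{E_0} = \{|\xi|^2 + V(x) = E_0\}$ and invariant under the Hamiltonian flow of $p(x,\xi)=|\xi|^2+V(x)$. Moreover, by Remark~\ref{xi2},
\begin{equation*}
\int_{\Sigma_{E_0}} |\xi|^2\, d\mu \,=\, \lim_{h\to 0} \int |\nabla u_h|^2\, dx.
\end{equation*}
Now combine this with the derivative formula: $\dot{E}_t = 2t \int |\nabla u_t|^2 \,dx$, so $\int |\nabla u_t|^2\,dx = \dot{E}_t/(2t)$. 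The key point is that the left-hand side has a finite limit, namely $\int|\xi|^2 d\mu$, so $\dot{E}_t = 2t \int|\nabla u_t|^2 dx \to 0$ as $t\to 0$; but this still does not immediately give a contradiction. The real leverage comes from integrating the derivative formula down to $0$: for $0 < t \le 1$,
\begin{equation*}
E_1 - E_t \,=\, \int_t^1 2s \left(\int |\nabla u_s|^2\, dx\right) ds,
\end{equation*}
and as $t \to 0$ the left side tends to the finite number $E_1 - E_0$ while the integrand is $\ge 0$; the issue is to squeeze out of this that $\int|\nabla u_s|^2 dx$ cannot stay bounded away from $0$ along a subsequence, forcing $\int|\xi|^2 d\mu = 0$, i.e. $\mu$ is carried by the zero section $\{\xi = 0\}$. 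But $\mu$ is supported on $\Sigma_{E_0}$, so $\xi = 0$ forces $V(x) = E_0$ on $\operatorname{supp}\mu$; and $\mu$ being invariant under the Hamiltonian flow, which on $\{\xi=0\}\cap\{V=E_0\}$ reduces to the motion $\dot x = 2\xi = 0$, $\dot\xi = -\nabla V(x)$, invariance of a probability measure forces $\nabla V = 0$ on $\operatorname{supp}\mu$ (otherwise points escape the set $\{\xi=0\}$ instantaneously and mass leaks off $\Sigma_{E_0}$). Hence there is a point $x_0$ with $V(x_0) = E_0$ and $\nabla V(x_0) = 0$, so $E_0$ is a critical value of $V$, contradicting our assumption.

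The step I expect to be the main obstacle is the one just sketched in the middle of the previous paragraph: extracting from the finiteness of $E_1 - E_0 = \int_0^1 2s\bigl(\int|\nabla u_s|^2 dx\bigr)ds$ the conclusion that $\int|\nabla u_s|^2 dx \to 0$ along \emph{some} sequence $s\to 0$ whose semiclassical measure we can then use — and making sure the extraction of the convergent subsequence for the semiclassical measure is compatible with that sequence. Here one has to be a little careful: $\dot E_s = 2s\int|\nabla u_s|^2 dx$, and since $E_s$ increases to $E_0$... wait, $E_s$ increases \emph{in} $s$, so as $s\downarrow 0$, $E_s\downarrow E_0$ from above. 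Monotonicity alone does not control $\dot E_s$ pointwise, but the integrability of $\dot E_s$ near $0$ does give $\liminf_{s\to 0} s\int|\nabla u_s|^2 dx = 0$ (indeed $\liminf_{s\to 0}\dot E_s = 0$, else $\int_0 \dot E_s\,ds$ diverges), i.e. along some sequence $s_k\to 0$ one has $s_k\int|\nabla u_{s_k}|^2 dx \to 0$. Extract a further subsequence so that the semiclassical measures $\mu_{s_k}$ converge to some $\mu$. If $\int|\xi|^2 d\mu > 0$ then $\int|\nabla u_{s_k}|^2 dx \to \int|\xi|^2 d\mu > 0$ by Remark~\ref{xi2}, and then $s_k\int|\nabla u_{s_k}|^2 dx\to 0$ forces $s_k\to 0$, which is consistent — so this does \emph{not} yet contradict anything, and one genuinely needs the structural argument ($\mu$ on $\Sigma_{E_0}$, flow-invariant) to conclude that $E_0$ is critical regardless of whether $\int|\xi|^2 d\mu$ vanishes. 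In fact the cleanest route is: take \emph{any} sequence $h\to 0$, extract a semiclassical measure $\mu$ on $\Sigma_{E_0}$, invariant under the Hamiltonian flow of $|\xi|^2 + V$; if $E_0$ is a regular value then $\Sigma_{E_0}$ is a smooth compact hypersurface on which the flow has no fixed points, which by itself is not an obstruction — the obstruction must instead come from $\dot E_h = 2h\int|\nabla u_h|^2 dx$ together with $\int_0^1\dot E_h\,dh < \infty$ giving $\liminf h\int|\nabla u_h|^2 = 0$, hence $\liminf \int|\xi|^2 d\mu_h$ over suitable subsequences must be reconciled, and one shows $\int|\xi|^2 d\mu = 0$ for the limiting measure along the $\liminf$ sequence, putting $\operatorname{supp}\mu \subset \{\xi = 0, V = E_0\}$ and then flow-invariance gives $\nabla V = 0$ there. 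Thus the delicate point is really the interplay between the integral bound, the choice of subsequence, and the passage to the semiclassical limit; everything else is the soft structure theory of Section~2 plus the elementary monotonicity argument.
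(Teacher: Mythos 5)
Your first half (the Hellmann--Feynman formula gives $\dot E_t \geq 0$, $E_t\geq\inf V$, hence convergence) is correct and is exactly the paper's argument, as is the endgame: once you have a semiclassical measure $\mu$ on $\Sigma_{E_0}$ with $\int|\xi|^2\,d\mu=0$, its support lies in $\{\xi=0\}\cap\Sigma_{E_0}$, and invariance under the Hamiltonian flow forces $\nabla V=0$ on the support of $\mu$, so $E_0$ is a critical value.

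The gap is in the middle step, where you produce such a $\mu$. You claim that $\int_0^1\dot E_s\,ds = E_1-E_0<\infty$ together with $\dot E\geq 0$ forces $\liminf_{s\to 0}\dot E_s=0$. That implication is false: $\dot E\equiv 1$ is nonnegative and integrable near $0$ yet has $\liminf$ equal to $1$, not $0$. So you cannot, from this alone, extract a sequence $s_k\to 0$ along which the gradient term degenerates. The correct leverage comes from the explicit factor of $t$ in $\dot E_t=2t\int|\nabla u_t|^2\,dx$. The semiclassical quantity that converges to $\int|\xi|^2\,d\mu$ is $t^2\int|\nabla u_t|^2\,dx$, not $\int|\nabla u_t|^2\,dx$ (the displayed formula in Remark~\ref{xi2} is missing the $h^2$; compare with \refeq{limit} inside the proof of Lemma~\ref{SCM}, which has it). Lemma~\ref{SCM} is exactly the missing ingredient: if every semiclassical measure at energy $E_0$ had $\int|\xi|^2\,d\mu>0$, a compactness argument would give $\liminf_{t\to 0}t^2\int|\nabla u_t|^2\geq c>0$, hence $\dot E_t = (2/t)\cdot t^2\int|\nabla u_t|^2\geq c/t$ for small $t$, and $\int_0 (c/t)\,dt$ \emph{does} diverge --- this is the genuine contradiction with the convergence of $E_t$, and it is where the monotonicity really pays off. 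You dropped one power of $t$ on both sides of the ledger (the $h^2$ in the semiclassical identity, and the extra $t$-weight that is actually needed before one can claim the $\liminf$ vanishes), which makes the argument look like it closes when in fact it rests on a false claim.
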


\begin{proof}
Let $E_t$ be an eigenvalue branch and $u_t$ the corresponding 
normalized eigenvector branch. The derivative of $E_t$ may be computed 
by differentiating eq. \refeq{GenKato} (see \cite{Kato}), this yields 
\begin{equation}\label{Deriv}
\dot{E_t}\,=\,2t \int_X |\nabla u_t|^2(x)dx.
\end{equation}
This expression is obviously non-negative so that $E_t$ is a non-decreasing 
function of $t.$ Since, for any $t$ we have $E_t\geq \min_{X} V(x),$ $E_t$ 
has a limit $E_0$ when $t$ goes to $0$.

The condition on concentration of semiclassical measures then enters 
in the following lemma.

\begin{lem}\label{SCM}
Assume that for any semiclassical measure $\mu$ at the energy $E_0,$ 
$$(C)~~~~~~~\int_{\Sigma_{E_0}} |\xi|^2 d\mu(x) \,>\,0.$$  
Then, for any eigenbranch $(E_t,u_t)$ such that $\lim E_t=E_0,$  
\begin{equation}\label{LowerBound}
 \exists c>0,~ \liminf_{t\rightarrow 0} ~~t^2\cdot \int_X |\nabla u_t(x)|^2 dx \geq c.
\end{equation}
\end{lem}   
Assuming this lemma, we finish the proof of the theorem. 
If we suppose, for a contradiction, that $E_0$ satisfies condition $(C)$, then 
using \refeq{Deriv}, and \refeq{LowerBound} we get 
$$
\dot{E}_t \geq \frac{2c}{t},
$$   
and this is a contradiction since $E_t$ has to converge when $t$ goes 
down to $0.$ So condition $(C)$ is not fulfilled and there exists a semiclassical measure 
at energy $E_0$ such that  
$$\int_{\Sigma_{E_0}} |\xi^2| d\mu \,=\,0.$$
This implies that the support of $\mu$ is contained in the level set 
$\{ \xi =0 \}\subset \Sigma_{E_0}.$ Since $d\mu$ has to be invariant by 
the hamiltonian flow 
$$
\left\{ \begin{array}{lcr}
\dot{x}&=& 2\xi \\
\dot{\xi}&=& -\nabla V(x)
\end{array}
\right.
$$   
this forces $E_0$ to be a critical value of $V$.
\end{proof}

It remains to prove lemma \ref{SCM}. 
\begin{proof}[Proof of Lemma \ref{SCM}] 
As it is standard in this 
kind of settings (see \cite{HHM} or \cite{BZ}), we actually work in the 
reverse direction, assuming that \refeq{LowerBound} is not true. We may thus 
find a sequence $t_n$ going to $0$ and corresponding eigenfunctions and 
eigenvalues $E_{t_n}$ and $u_{t_n},$ such that 
$$
t_n^2\int_X |\nabla u_{t_n}(x)|^2 dx \leq \frac{1}{n}.
$$
We now change notations again and let 
$t_n=h$ since the rest of the argument is of semiclassical 
nature (and, as usual in this setting $h$ actually stands for $h_n$). 
We thus have
\begin{equation}\label{Upper}
0 \leq h^2\int_X |\nabla u_{h}(x)|^2 dx \leq \eps(h).
\end{equation}
where $\eps(h)$ is some function going to $0$ with $h.$
The sequence $u_h$ is $L^2$ normalized so that we may extract subsequences 
and find associated semiclassical measures at energy $E_0$ since 
$\lim E_h=E_0$. Let $d\mu$ be one of 
these semiclassical measures, by definition (see remark \ref{xi2} above)
\begin{equation}\label{limit}
\lim_{h\rightarrow 0}~~h^2\cdot \int_X |\nabla u_{h}(x)|^2 dx \,=\,\int_{\Sigma_{E_0}} |\xi^2| d\mu, 
\end{equation}
where the limit is understood along the subsequence defining $\mu.$ 
Putting \refeq{Upper} and \refeq{limit} together yields 
$$
\int_{\Sigma_{E_0}} |\xi^2| d\mu \, \,=\,0,
$$
thus finishing the proof of the lemma. 
\end{proof}

\begin{remk}
Since the energy surface is defined by $|\xi|^2+V(x)=E_0$, we can replace 
everywhere 
$$ \int_{\Sigma_{E_0}} |\xi^2| d\mu $$ by 
$$\int_{\Sigma_{E_0}} (E_0-V(x)) d\mu.$$
\end{remk}

In dimension $1$ this result can be refined using that the spectrum of 
a one-dimensional Schr\"odinger operator is known to be simple. 
If we assume that the potential $V$ has only non-degenerate minima, then the 
spectrum near the bottom of the energy is known (see \cite{SjoDim}) 
and there is an infinite number of eigenvalues close to the minimum. Since the eigenvalue 
branches cannot cross, the non-degenerate minimum is then only possible limit.\footnote{
I would like to thank San V\~u Ng\d oc and Fr\'ed\'eric Faure for pointing out that, in the one-dimensional case, 
the simplicity of the spectrum allows to improve the result.}

\section{Integrated remainder estimates}\label{SecIRE} 
In this section, we consider a smooth compact manifold $X$ of dimension 
$d\geq 2$ and a real-analytic family of Riemannian metrics $(g_{\tau})_{\tau\in[-1,1]}.$
We denote by $\Delta_\tau$ the Laplace operator associated with $g_\tau$ and 
by $E_j(\tau)$ the associated analytic branches of eigenvalues. We will also 
denote by $\langle \cdot,\cdot\rangle_\tau$ the scalar product on $L^2(X,g_\tau).$
We now define the usual following counting functions :

\begin{eqnarray*}
N(\tau,E)&=&\sharp \{ j~|~E_j(\tau)\leq E \},\\
R_M(\tau,E)&=&\sharp \{ j~|~E_j(\tau)\in (E-M,E+M]\,\} \\
           &=& N(\tau,E+M)-N(\tau,E-M).
\end{eqnarray*}

\begin{remk}\label{uniform}
Observe that since $(g_\tau)$ is analytic in $\tau$, the scalar products 
$\langle \cdot,\cdot\rangle_\tau $ define equivalent norms. 
The same is true at the level of $H^1$ : the norms 
$\| \cdot\|_{H^1(X,g_\tau)}$ defined by :
$$ \| u\|_{H^1(X,g_\tau)}^2\,=\, \int_M g_\tau(\nabla_\tau u,\nabla_\tau u)dvol_{g_\tau} + \| u\|_\tau^2,$$ 
are all equivalent.
\end{remk}
 
The following lemma expresses the variation of $E_t$ using a family of 
pseudodifferential operators. It also serves as a definition of the family $A_\tau$ that 
will be used in the rest of the paper.

\begin{lem}
There exists a real-analytic family of symbols $a_\tau,$ and, for any $\tau_0,$ 
there exists a constant $C$ such that, for any  normalized eigenbranch 
$(E_n(\tau),u_n(\tau))$ we have 
$$
\forall |\tau|\leq \tau_0,~ \left | \frac{\dot{E}_n(\tau)}{E_n(\tau)} - \langle \mbox{Op}^+(a_\tau) u_n(\tau),u_n(\tau)\rangle_\tau\right | \leq C E_n(\tau)^{-\und}.
$$
We will denote by $A_\tau\,=\, \mbox{Op}^+(a_\tau).$ 
\end{lem}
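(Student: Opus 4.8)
The starting point is the Kato-type differentiation formula for the eigenbranches of the analytic family $\Delta_\tau$. Writing the eigenvalue problem as a generalized eigenvalue problem $q_\tau(u,v)=E\,\langle u,v\rangle_\tau$, where $q_\tau(u)=\int_X g_\tau(\nabla_\tau u,\nabla_\tau u)\,dvol_{g_\tau}$, differentiation in $\tau$ gives an expression for $\dot E_n(\tau)$ as a quadratic form in $u_n(\tau)$ built from $\dot q_\tau$ and $\dot n_\tau$ (the $\tau$-derivatives of the Dirichlet form and of the Riemannian mass form, which appear because both the metric and the volume density vary). So the first step is to compute $\dot q_\tau$ and $\dot n_\tau$ explicitly as differential expressions involving $\dot g_\tau$ and, after dividing by $E_n(\tau)$, to recognize $\dot E_n(\tau)/E_n(\tau)$ as $\langle P_\tau u_n,u_n\rangle_\tau / q_\tau(u_n)$-type combination normalized against the eigenvalue.

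The second step is to interpret this quadratic form as the diagonal pairing of a pseudodifferential operator. Since $\dot q_\tau$ is a second-order differential operator (roughly $\mathrm{div}(\dot g_\tau\,\nabla\cdot)$ plus lower-order terms from the density) and $q_\tau(u_n)=E_n(\tau)\|u_n\|_\tau^2=E_n(\tau)$, the quantity $\dot E_n(\tau)/E_n(\tau)$ is (modulo lower-order corrections) the expectation in the state $u_n$ of a second-order operator divided by $E_n(\tau)$. The natural candidate is thus to take $a_\tau$ to be, on $S^*X$, the principal symbol of this second-order operator rescaled by $|\xi|_{g_\tau}^{-2}$ — i.e., the symbol of $\dot q_\tau$ composed with the inverse of the symbol of $q_\tau$ — which is a well-defined homogeneous-degree-$0$ function on $S^*X$, real-analytic in $\tau$ by analyticity of $g_\tau$. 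Then $\mathrm{Op}^+(a_\tau)$ is an order-$0$ pseudodifferential operator, and one writes
\begin{equation*}
\frac{\dot E_n(\tau)}{E_n(\tau)} - \langle \mathrm{Op}^+(a_\tau)u_n,u_n\rangle_\tau
= \frac{1}{E_n(\tau)}\bigl\langle \bigl(\Lambda_\tau - \mathrm{Op}^+(a_\tau)(-\Delta_\tau)\bigr)u_n,u_n\bigr\rangle_\tau + (\text{density terms}),
\end{equation*}
where $\Lambda_\tau$ is the self-adjoint operator representing $\dot q_\tau$. By construction $\Lambda_\tau - \mathrm{Op}^+(a_\tau)(-\Delta_\tau)$ has order $\leq 1$: its principal symbol vanishes because $a_\tau|\xi|_{g_\tau}^2$ was chosen to match the principal symbol of $\Lambda_\tau$. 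The lower-order density contributions (from $\partial_\tau dvol_{g_\tau}$, which is of order $0$ relative to $-\Delta_\tau$) are handled the same way since after division by $E_n(\tau)$ they are $O(E_n(\tau)^{-1})\|u_n\|_\tau^2$.

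The third and final step is to turn "order $\leq 1$ operator $B_\tau$" into the quantitative bound $C\,E_n(\tau)^{-1/2}$. Since $B_\tau$ is a first-order pseudodifferential operator, $|\langle B_\tau u_n,u_n\rangle_\tau|\le C'\|u_n\|_{H^1(X,g_\tau)}\|u_n\|_\tau$, and for a normalized eigenfunction of $-\Delta_\tau$ with eigenvalue $E_n(\tau)$ one has $\|u_n\|_{H^1(X,g_\tau)}^2 = (1+E_n(\tau))\le C'' E_n(\tau)$; dividing the whole expression by $E_n(\tau)$ gives $O(E_n(\tau)^{-1/2})$. The uniformity of the constant $C$ for $|\tau|\le\tau_0$ follows from Remark \ref{uniform} — the equivalence of the $L^2$ and $H^1$ norms across $\tau$ — together with continuity (indeed analyticity) in $\tau$ of the symbol $a_\tau$ and of the operators involved, so all pseudodifferential seminorms are bounded uniformly on the compact interval $[-\tau_0,\tau_0]$.

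\textbf{Main obstacle.} The delicate point is not the symbol calculus but the bookkeeping of densities flagged in Remark \ref{rksymbol}: because both $g_\tau$ and $dvol_{g_\tau}$ vary, one must be careful whether $\mathrm{Op}^+$ and the adjoints are taken with respect to the fixed reference $L^2$ or the moving $\langle\cdot,\cdot\rangle_\tau$, and the eigenfunction equation $-\Delta_\tau u_n = E_n(\tau)u_n$ must be used in the form adapted to $\langle\cdot,\cdot\rangle_\tau$. Getting the principal symbol of $\Lambda_\tau$ exactly right — so that the leading term genuinely cancels and only an order-one remainder survives — is where the proof has to be done with care; once that cancellation is in place, the $E_n(\tau)^{-1/2}$ estimate is routine elliptic bookkeeping.
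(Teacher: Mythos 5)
Your proposal is correct and takes essentially the same route as the paper: differentiate the eigenvalue identity à la Kato to get $\dot E_n = \langle\dot\Delta_\tau u_n,u_n\rangle_\tau$, take $a_\tau$ to be the degree-$0$ symbol obtained by dividing the principal symbol of the second-order operator $\dot\Delta_\tau$ by that of $\Delta_\tau$, and absorb the mismatch into a lower-order pseudodifferential remainder whose pairing with the eigenfunction yields the $E_n^{-1/2}$ bound. The only cosmetic difference is that the paper compares order-$0$ operators $\dot\Delta_\tau(\Delta_\tau+1)^{-1}$ and $\mathrm{Op}^+(a_\tau)$ directly (so the remainder is order $-1$), whereas you compare at order $2$ and then divide by $E_n$, which amounts to the same estimate after the elliptic bookkeeping you describe.
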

\begin{proof}
First, by differentiating the eigenvalue equation 
$$ E_n(\tau)\,=\,\langle \Delta_\tau u_n(\tau),u_n(\tau)\rangle_\tau,$$
we find a second-order differential operator $\dot{\Delta}_\tau$ such that 
$$ \dot{E_n}(\tau)\,=\,\langle {\dot{\Delta}}_\tau u_n(\tau),u_n(\tau)\rangle_\tau.$$
(Observe that since the branch is normalized and using the eigenvalue equation we have that 
$\langle \Delta \dot{u},u\rangle = E\langle \dot{u},u\rangle = 0$).
Denote by $a_\tau$ the principal symbol of $\dot{\Delta}_\tau(\Delta_\tau +1)^{-1},$ and 
let $A_\tau=\mbox{Op}^+(a_\tau).$
By definition $R_\tau\,=\,\dot{\Delta}_\tau \left(\Delta_\tau +1\right)^{-1}-A_\tau$ is an analytic 
family of pseudodifferential operators of order $-1.$ 
In particular, $R_\tau$ is uniformly bounded from 
$L^2$ into $H^1.$ This yields the bound 
$$\left | \frac{\dot{E}_n(\tau)}{E_n(\tau)+1} - \langle \mbox{Op}^+
(a_\tau) u_n(\tau),u_n(\tau)\rangle_\tau\right | \leq C E_n(\tau)^{-\und}.$$
We thus get 
$$\left | \frac{\dot{E}_n(\tau)}{E_n(\tau)} - \langle \left (1+E_n^{-1}\right)
\mbox{Op}^+(a_\tau) u_n(\tau),u_n(\tau)\rangle_\tau\right | 
\leq C E_n(\tau)^{-\und}\left( 1+E_n^{-1}\right),$$
from which the claimed bound follows since 
$\left| \langle \mbox{Op}^+(a_\tau) u_n(\tau),u_n(\tau)\rangle_\tau \right |$ is uniformly bounded.
\end{proof} 

The main result of this section is then the following theorem (the notion of geometric control is defined 
in definition \ref{defgc} in the following section).

\begin{thm}\label{IREthm} 
Suppose that $a_0$ is non-negative and that there exists $\eps>0$ such that the subset 
$\left\{ (x,\xi) \in S^* X~,~a_0(x,\xi)\,>\eps \right \}$  geometrically controls $(S^*X,g_0)$, then 
for any fixed $M$, there exists $\tau_0,$ $E_0$ and $K$ such that 
\begin{equation}\label{IRE} 
\forall\, E>E_0,~~~\int_{-\tau_0}^{\tau_0} R_M(t,E) \,dt ~\leq ~K\, E^{\frac{d}{2}-1}.
\end{equation}
\end{thm}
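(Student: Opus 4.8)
The strategy is to promote the lemma's lower bound for $\dot E_n$ into the statement that every eigenbranch crosses a high window $(E-M,E+M]$ monotonically and in a very short time, and then to count, via Fubini, the total time spent in the window by all branches. The positivity input comes first. Since $a_0\ge 0$ and $\{a_0>\eps\}$ geometrically controls $(S^*X,g_0)$, and since geometric control is an open condition while $\tau\mapsto(g_\tau,a_\tau)$ is real-analytic, one can choose $\tau_0$ so small that $\{a_\tau>\eps/2\}$ still geometrically controls $(S^*X,g_\tau)$ for all $|\tau|\le\tau_0$; then Corollary \ref{LowerCor} (the geometric-control estimate of the next section), together with Remark \ref{uniform}, furnishes $c_0>0$ and $E_*$, uniform in $|\tau|\le\tau_0$, such that $\langle A_\tau u_n(\tau),u_n(\tau)\rangle_\tau\ge c_0$ whenever $E_n(\tau)\ge E_*$. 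Plugging this into the lemma, and enlarging $E_*$ so that $CE_*^{-\und}\le c_0/2$, one gets for all $|\tau|\le\tau_0$
\[
E_n(\tau)\ge E_*\ \Longrightarrow\ \frac{\dot E_n(\tau)}{E_n(\tau)}\ \ge\ c_0-\frac{c_0}{2}\ =\ \frac{c_0}{2}\,>\,0 .
\]
A short continuity argument then shows that whenever $E_n(\tau_1)\ge E_*$ the branch $E_n$ is strictly increasing on $[\tau_1,\tau_0]$ (it cannot come back down to the level $E_*$), so $\{t:E_n(t)\ge E_*\}$ is empty or a right-closed subinterval of $[-\tau_0,\tau_0]$.

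Now fix $M$ and let $E>E_0$ for a threshold $E_0$ to be chosen with $E_0>2M$ and $E_0-M>E_*$. For a branch $E_n$ I would consider $W_n=\{t\in[-\tau_0,\tau_0]:E_n(t)\in(E-M,E+M]\}$. On $W_n$ one has $E_n(t)>E-M>E_*$, so $W_n$ is a subinterval of a region where $E_n$ is strictly increasing with $\dot E_n\ge\frac{c_0}{2}E_n\ge\frac{c_0}{2}(E-M)$; since $E_n$ gains at most $2M$ across $W_n$, we get $|W_n|\le \frac{4M}{c_0(E-M)}$. Also, if $W_n\ne\emptyset$ then $E_n(-\tau_0)\le E+M$: either $E_n(-\tau_0)>E_*$, in which case $E_n$ increases on all of $[-\tau_0,\tau_0]$ and hence $E_n(-\tau_0)\le\max_t E_n(t)\le E+M$, or $E_n(-\tau_0)\le E_*<E+M$. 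So the number of branches meeting the window is at most $N(-\tau_0,E+M)$. Since by analytic perturbation theory the branches counted with multiplicity exhaust the spectrum, $R_M(t,E)=\sum_n\mathbf{1}_{W_n}(t)$, whence by Tonelli
\[
\int_{-\tau_0}^{\tau_0}R_M(t,E)\,dt=\sum_n|W_n|\ \le\ N(-\tau_0,E+M)\cdot\frac{4M}{c_0(E-M)}\ \le\ C_W(E+M)^{\frac d2}\cdot\frac{4M}{c_0(E-M)},
\]
using the elementary Weyl upper bound $N(-\tau_0,\lambda)\le C_W\lambda^{d/2}$; for $E>E_0$ the right-hand side is $\le KE^{\frac d2-1}$ with $K=K(M,d,c_0,C_W)$, which is the claim.

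The one genuinely substantial step is the uniform positivity $\langle A_\tau u_n(\tau),u_n(\tau)\rangle_\tau\ge c_0$ for $|\tau|\le\tau_0$ and $E_n(\tau)$ large: it relies on Corollary \ref{LowerCor}, and it requires checking that the geometric-control hypothesis, which is assumed only for $g_0$, persists \emph{uniformly} under the small real-analytic perturbation of the metric and of the symbol $a_\tau$ (here a little care is needed because $a_\tau$ need not be non-negative, only bounded below by $-\eps/2$, so one may have to absorb a multiple of $\op(1)$ and use the equivalence of the $\langle\cdot,\cdot\rangle_\tau$ of Remark \ref{uniform}). Everything else — the monotonicity of the branches above $E_*$, the window crossing-time estimate, the count of branches meeting the window, and the Weyl upper bound — is soft.
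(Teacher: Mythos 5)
Your argument follows the paper's proof essentially step for step: uniform positivity of $\langle A_\tau u_n,u_n\rangle$ from Corollary \ref{LowerCor} gives the uniform lower bound $\frac{d}{d\tau}\ln E_n\ge\kappa$ above a threshold energy, monotone crossing of the window $(E-M,E+M]$ then bounds each branch's dwell time by $O(M/E)$, and Weyl's law at $\tau=-\tau_0$ caps the number of contributing branches by $O(E^{d/2})$; summing (your Tonelli step is exactly the paper's equation \refeq{Manip}) gives $O(E^{d/2-1})$. The only cosmetic difference is that you estimate $\dot E_n\ge\frac{c_0}{2}(E-M)$ directly, where the paper integrates $\frac{d}{d\tau}\ln E$ and uses $\ln\frac{E+M}{E-M}\le C/E$; these are interchangeable. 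You also correctly flag a small gap that the paper glosses over: Corollary \ref{LowerCor} is stated for $a_\tau\ge 0$ for all $\tau$, whereas the theorem only assumes $a_0\ge 0$, and your fix (replace $a_\tau$ by $a_\tau+\delta$ with $\delta$ small, then subtract $\delta\op(1)$ and absorb the error using Remark \ref{uniform} and the fact that $\{a_0+\delta>\eps+\delta\}=\{a_0>\eps\}$ still controls) is the right way to close it. One minor point you could make explicit, as the paper also leaves implicit, is the passage between $\langle A_\tau u,u\rangle_\tau$ (which the lemma controls) and $\langle A_\tau u,u\rangle_0$ (which the corollary controls); writing $dvol_{g_\tau}=\rho_\tau\,dvol_{g_0}$ and applying the corollary to the symbols $\rho_\tau a_\tau$ in place of $a_\tau$ handles this cleanly.
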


The universal remainder in Weyl's law gives that for a fixed $t$, 
$R_M(t,E)=O(E^{\frac{d-1}{2}})$ as $E$ goes to infinity. The theorem 
thus says that averaging with respect to the perturbation 
greatly improves this estimate.
  
The rest of the section is devoted to the proof of this theorem. 
The central part of the argument is actually interesting by itself 
and consists in proving a somewhat explicit uniform control estimate. 

\subsection{A uniform control estimate}
We begin by recalling the notion of geometric control observing that since the 
velocity is constant along a geodesic, the geodesic flow 
of a Riemannian metric is well-defined on $S^*X.$ 

\begin{defn}\label{defgc}
We say that an open subset $\U$ of $S^*X$ \em{geometrically controls} 
$(S^*X,g)$ if for any $(x,\xi) \in S^*X$, there exists $T\in\R$ and 
$(x_0,\xi_0) \in \U$ such that $(x,\xi)\,=\,\Phi^T(x_0,\xi_0)$ where 
$\Phi^.(.,.)$ denotes the geodesic flow of $g.$
\end{defn}    

It is well-known that geometric control implies a non-concentration 
estimate for eigenfunctions. More precisely, if $U$ is an open subset of 
$X$ such that $\U=S^*U$ geometrically controls $(S^*X,g)$ then, there 
exists some positive $c$ and an energy $E_0$ such that
\begin{equation}\label{ControlEstimate}
\int_U |u_n|^2 dvol_g \geq c\int_X |u_n|^2 dvol_g,
\end{equation}
for any eigenfunction $u_n$ of the Laplace operator associated with $g$ such that $E_n\geq E_0$. 
(Actually the restriction on $E$ for this estimate may be released using the principle of unique 
continuation for solutions of second order elliptic PDE's see remark \ref{PUC} below)

Such a control estimate can be proved by a contradiction argument relying 
on known properties of semiclassical measures (\cite{HHM}, \cite{BZ} for the scheme of such an argument). 
Such an approach can probably be adapted to get a control estimate that is uniform in $\tau.$ 
We propose here a slightly different proof that also allows us to get some uniform control on $c.$

\begin{prop}\label{UniformProp}
Let $\U$ be an open subset of $S^*X$ that geometrically controls $(S^*X,g_0)$. 
There exists $\tau_0\in \R,$ $E_0\in \R,$ a zeroth-order pseudodifferential operator 
$\Pi$ on $X$ and $c>0$ such that : 
\begin{enumerate}
\item The symbol $\pi$ of $\Pi$ is supported in $\U$ and, $0 \leq \pi(x,\xi) \leq 1$ on $S^*X.$ 
\item For any $|\tau|\leq \tau_0$ and any $u$ eigenfunction of $\Delta_\tau$ with energy greater than $E_0$ 
then the following non-concentration estimate holds : 
\begin{equation}\label{UniformEst}
  \| \Pi u\|_\tau \geq c \| u\|_\tau.
\end{equation}
\end{enumerate}
\end{prop}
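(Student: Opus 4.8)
The plan is to prove Proposition~\ref{UniformProp} by a quantitative version of the standard propagation argument, made uniform in $\tau$ by exploiting analyticity of the family $g_\tau$. First I would fix the metric $g_0$ and use geometric control to produce a clean ``averaging along the flow'' operator at $\tau=0$: since $\U$ controls $(S^*X,g_0)$ and $S^*X$ is compact, there exists a finite time $T>0$ such that the sets $\Phi^{-t}(\U)$, $t\in[0,T]$, cover $S^*X$ (a compactness argument on the escape time, which is finite and upper semicontinuous). Choosing a smooth nonnegative $\chi$ supported in $\U$ with $\chi>0$ on a slightly smaller controlling set, the symbol $b_0(x,\xi):=\frac{1}{T}\int_0^T \chi\circ\Phi^t_0(x,\xi)\,dt$ is then smooth, nonnegative, and bounded below by some $\delta_0>0$ on all of $S^*X$. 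The point of this construction is that $b_0$ is, up to a symbol of lower order, the symbol of $\frac1T\int_0^T e^{it\sqrt{-\Delta_0}}\,\mathrm{Op}^+(\chi)\,e^{-it\sqrt{-\Delta_0}}\,dt$, so Egorov's theorem controls the difference.

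Next I would transport this to nonzero $\tau$. Let $\chi_\tau$ be an analytic family of symbols supported in $\U$ (independent of $\tau$ is fine, or slightly shrinking) and set
\[
b_\tau(x,\xi)\,=\,\frac1T\int_0^T \chi_\tau\circ\Phi^t_\tau(x,\xi)\,dt.
\]
Because $\tau\mapsto g_\tau$ is real-analytic and $T$ is fixed, the geodesic flows $\Phi^t_\tau$ depend continuously (indeed analytically) on $\tau$ uniformly for $t\in[0,T]$; hence $b_\tau\to b_0$ in $C^\infty(S^*X)$ as $\tau\to 0$, and there is $\tau_0>0$ and $\delta>0$ with $b_\tau\ge\delta$ on $S^*X$ for all $|\tau|\le\tau_0$. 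Now quantize using the positive quantization: $B_\tau:=\mathrm{Op}^+(b_\tau)$ is nonnegative, and by Egorov (applied to the flow $e^{it\sqrt{-\Delta_\tau}}$, with the operator norm of the remainder uniform in $\tau$ on the compact $\tau$-interval, again by analyticity) one has, for $u$ an $L^2(X,g_\tau)$-eigenfunction of $\Delta_\tau$ with eigenvalue $E$,
\[
\langle B_\tau u,u\rangle_\tau \,=\, \frac1T\int_0^T \langle \mathrm{Op}^+(\chi_\tau)\,e^{-it\sqrt{-\Delta_\tau}}u,\,e^{-it\sqrt{-\Delta_\tau}}u\rangle_\tau\,dt \,+\, O(E^{-1/2}),
\]
where the right-hand integrand equals $\langle \mathrm{Op}^+(\chi_\tau)u,u\rangle_\tau$ because $e^{-it\sqrt{-\Delta_\tau}}u=e^{-it\sqrt E}u$ is a unimodular multiple of $u$. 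On the other hand $\langle B_\tau u,u\rangle_\tau \le \langle \mathrm{Op}^+(b_\tau - \delta)u,u\rangle_\tau + \delta\|u\|_\tau^2$, and combining with the lower bound $b_\tau\ge\delta$ this gives, after absorbing the $O(E^{-1/2})$ term for $E$ large (uniformly in $|\tau|\le\tau_0$),
\[
\langle \mathrm{Op}^+(\chi_\tau) u,u\rangle_\tau \,\ge\, \frac{\delta}{2}\,\|u\|_\tau^2,
\qquad |\tau|\le\tau_0,\ E\ge E_0.
\]

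Finally I would convert this lower bound on a positive-quantized operator into the claimed estimate $\|\Pi u\|_\tau\ge c\|u\|_\tau$. Take $\pi_\tau$ to be a smooth symbol with $0\le\pi_\tau\le 1$, supported in $\U$, equal to $1$ on the support of $\chi_\tau$ (so $\chi_\tau \le \|\chi_\tau\|_\infty\,\pi_\tau^2$), and set $\Pi=\mathrm{Op}^+(\pi_{\tau})$ — or, if one wants a single $\Pi$ independent of $\tau$ as the statement literally requires, take $\pi$ supported in $\U$ and $\equiv 1$ on a neighborhood of $\bigcup_{|\tau|\le\tau_0}\mathrm{supp}\,\chi_\tau$, which is still contained in $\U$ after shrinking $\tau_0$. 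Then $\mathrm{Op}^+(\chi_\tau)\le \|\chi_\tau\|_\infty\,\Pi^*\Pi + \mathrm{Op}^+(\text{order }-1)$ (using the symbol calculus and positivity of the quantization for the remainder sign), so
\[
\frac{\delta}{2}\|u\|_\tau^2 \le \langle \mathrm{Op}^+(\chi_\tau)u,u\rangle_\tau \le \|\chi_\tau\|_\infty\|\Pi u\|_\tau^2 + C E^{-1}\|u\|_\tau^2,
\]
and for $E$ large this yields $\|\Pi u\|_\tau^2 \ge \frac{\delta}{4\|\chi_\tau\|_\infty}\|u\|_\tau^2 =: c^2\|u\|_\tau^2$, with $c$ uniform over $|\tau|\le\tau_0$.

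The main obstacle is making every estimate genuinely uniform in $\tau$: the finiteness of the control time $T$ at $\tau=0$ is stable under small perturbations, but one must check that the Egorov remainder, the symbol calculus remainders, and the convergence $b_\tau\to b_0$ are all controlled uniformly on $|\tau|\le\tau_0$. This is where analyticity (or even just smooth dependence) of $g_\tau$ is used decisively, together with the fact that all these constructions only involve the flow up to the fixed time $T$; the compactness of $X$ and of the parameter interval then does the rest.
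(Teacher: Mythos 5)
Your proof is correct and takes a genuinely different route from the paper's. The paper's argument decomposes the identity via a finite partition of unity $\{\pi_i\}_{1\le i\le N}$ subordinate to an open cover of $S^*X$, chosen so that each $\Phi_\tau^{-T_i}(\mathrm{supp}\,\pi_i)$ sits inside $\U$ for all small $\tau$; the key technical observation is that each composition $\Pi_i\,U_\tau(T_i)\,(\mathrm{Id}-\Pi_0)$ has empty wave-front set and is therefore smoothing, uniformly in $\tau$ (uniformity being supplied by the Hadamard-parametrix appendix). Cauchy--Schwarz on each piece and summation over $i$ then yield the estimate. You instead build a single flow-averaged symbol $b_\tau=\frac1T\int_0^T\chi_\tau\circ\Phi^t_\tau\,dt$ that is uniformly bounded below, quantize it positively, and compare $\mathrm{Op}^+(b_\tau)$ to the Egorov average $\frac1T\int_0^T U_\tau(t)^*\mathrm{Op}^+(\chi_\tau)\,U_\tau(t)\,dt$, exploiting that $U_\tau(t)u$ is a unimodular multiple of $u$ when $u$ is an eigenfunction. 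Both approaches rest on the same hard analytic input --- uniform-in-$\tau$ control of the wave propagator up to a fixed time $T$ --- but package it differently: the paper proves a uniform smoothing bound for certain wave-front-free compositions, whereas you need a uniform Egorov remainder. Your route is arguably cleaner conceptually, reducing the whole proposition to ``symbol bounded below implies operator bounded below modulo $O(E^{-1/2})$''; the paper's route is a bit more elementary in that it never invokes Egorov, only the structure of $WF(U_\tau(T))$.

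Two small points. Definition \ref{defgc} allows the control time to have either sign, so your flow average should be symmetrized to $\frac1{2T}\int_{-T}^{T}$; this is cosmetic. Also, the line ``$\langle B_\tau u,u\rangle_\tau\le\langle\mathrm{Op}^+(b_\tau-\delta)u,u\rangle_\tau+\delta\|u\|_\tau^2$'' has the inequality pointing the wrong way for the conclusion you draw; it should read as an equality up to an $O(E^{-1/2})$ symbol-calculus remainder (since $\mathrm{Op}^+$ is linear and $\mathrm{Op}^+(\delta)=\delta\,\mathrm{Id}$ modulo order $-1$). The inequality you actually use is $\mathrm{Op}^+(b_\tau-\delta)\ge 0$, which gives $\langle B_\tau u,u\rangle_\tau\ge\delta\|u\|_\tau^2-CE^{-1/2}\|u\|_\tau^2$; with that correction the rest of your argument goes through as written.
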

  
\begin{remk}\label{PUC}
Estimate \refeq{ControlEstimate} may be extended to any eigenfunction, without 
any restriction on the energy. Such an extension requires that one knows that an eigenfunction 
of a Laplace operator cannot vanish on an open subset (thus proving the control estimate 
for any finite number of eigenfunctions). The principle of unique continuation 
gives this property for eigenfunctions of the Laplace operator. 
It is not clear to the author if this non-vanishing 
property holds in the cotangent bundle. It can also be observed that, strictly speaking, 
the formulation with the microlocal cutoff only has a meaning in the high energy limit 
since the microlocal cutoff is defined modulo smoothing operators.  
\end{remk}
   
\begin{proof}
Let $(x,\xi)\in S^* X$, the geometric control assumption implies that there exists 
some $T(=T(x,\xi))$ and $(x_0,\xi_0)\in \U$ such that $\Phi_0^T(x_0,\xi_0)=(x,\xi).$ 
Using the continuity properties of solutions of an O.D.E. both with respect to 
initial conditions and with respect to parameters, we can find 
\begin{itemize}
\item[-]some $\tau_{\max}(x,\xi),$ 
\item[-] two conical neighbourhoods of $(x,\xi)$ : $\V=\V_{(x,\xi)}$ and $\W=\W_{(x,\xi)},$ 
\item[-] a conical neighbourhood $\tilde{\U}=\tilde{\U}_{(x,\xi)}$ of $(x_0,\xi_0)$ 
\end{itemize}
such that 
$\V$ is compactly included in $\W$ and $\Phi_\tau^{-T}(\W) \subset \tilde{\U} \Subset \U$ 
for any $|\tau| \leq \tau_{\max}(x,\xi).$ 
Since $S^*X$ is compact, we can find a finite collection $\left( (x_i,\xi_i)\right )_{1\leq i\leq N}$ 
such that $S^*X\,=\, \bigcup_{i=1}^N \V_{(x_i,\xi_i)}.$ 
We now use the index $i$ to denote 
any quantity formerly indexed by (or attached to) $(x_i,\xi_i).$ 
Denote by $\tau_0$ the minimum of all $\tau_{{\max}}(x_i,\xi_i).$
We can then find a smooth partition of unity $\pi_i$ such that 
each $\pi_i$ is identically $1$ in $\V_i$ and identically $0$ outside $\W_i.$ Note that, by construction, 
 
$$
\forall \, |\tau|\leq \tau_0,~~\bigcup_{i=1}^N \Phi_\tau^{-T_i}(\W_i)
\, \subset\, \bigcup_{i=1}^N \tilde{\U}_i \, \Subset\, \U.
$$  
We denote by $\pi_0$ a smooth function which is identically $1$ in 
$\bigcup_{i=1}^N \tilde{\U}_i$ and $0$ 
outside $\U.$ Denote by $\Pi_i\,=\op(\pi_i),$ any by 
$U_{\tau}(T)\,=\,\exp~iT\sqrt{\Delta_\tau},$ we claim that 
for any $i>0$, and for any $|\tau|\leq \tau_0,$ 
the operator $R_i\,=\,\Pi_i U_\tau(T_i) \left(\mbox{Id}-\Pi_0 \right)$ 
is smoothing. Indeed, its wave-front set is included 
$$
\left\{ (y,\eta,x,\xi)~|~ (y,\eta)\in \supp(\pi_i),~(x,\xi)\in \supp(1-\pi_0),
~(y,\eta)=\Phi_\tau^{T_i}(x,\xi)\,\right \},
$$
which is empty by construction. In particular, each $R_{i,\tau}$ is bounded from $H^{-1}(X)$ into $L^2(X).$ 
Moreover, a careful analysis of the construction of the Hadamard parametrix (as it is for instance presented 
in \cite{Ber})  shows that there exists a uniform constant $C$ such that :
\begin{equation}\label{UniformSmoothing}
\forall \,|\tau|\leq \tau_0,\,\forall w\in H^{-1}(X), \|R_i w\|_{L^2(X,g_0)}\,\leq C \| u\|_{H^{-1}(X,g_0)},
\end{equation} 
(see the appendix for a brief sketch of proof).

Since the $(\pi_i)_{i>1}$ are a partition of unity, $\sum_{i>1} \Pi_i\,=\, \mbox{Id}.$ We thus have 
\begin{equation}
\| u \|_0^2\,=\, \sum_{i=1}^N \langle \Pi_i u,u \rangle_0.
\end{equation}
Using that $u$ is an eigenfunction of $\Delta_\tau,$ we have 
\begin{eqnarray*}
\left | \langle \Pi_i u,u \rangle_0 \right |&=& \left | \langle \Pi_i U_\tau(T_i)u,u \rangle_0 \right | \\
 &\leq & \left | \langle \Pi_i U_\tau(T_i)\Pi_0 u,u \rangle_0 \right | + \left| \langle R_i u,u\rangle_0 \right|.
\end{eqnarray*}
We use Cauchy-Schwarz inequality on both terms and the following facts on the norm of the different operators :
\begin{itemize}
\item[-] $\Pi_i$ is bounded from $L^2(X)$ to itself, 
\item[-] $U_\tau(T)$ is an isometry from $L^2(X),$ 
\item[-] and \refeq{UniformSmoothing}.
\end{itemize}
This yields : 
$$
\left | \langle \Pi_i u,u \rangle_0 \right |\,\leq\,\left( c_i \| \Pi_0 u\|_0\,
+\, C\| u\|_{H^{-1}(X,g_0)}\right) \cdot \| u\|_0.
$$

Using that the $H^{-1}(X,g_\tau)$ norms are all equivalent and that 
$\| u\|_{H^{-1}(X,g_\tau)}^2\,= \,(1+E)^{-1} \| u\|_\tau^2$  when $u$ is an eigenfunction 
of $\Delta_\tau,$ we get 
$$
\left | \langle \Pi_i u,u \rangle_0 \right |\,\leq\, \left( c_i \| \Pi_0 u\| + C(1+E)^{-\und}\|u\|_0 \right )
\cdot  \|u\|_0,
$$
for a constant $C$ independent of $\tau\in [-\tau_0,\tau_0].$

Summing all these inequalities and denoting by ${\tilde{c}}$ the maximum of the $c_i$ we obtain:
$$
\| u\|_0 \leq (N\tilde{c})\| \Pi_0 u\| + CN ~(1+E)^{-\und}\| u\|_0.
$$
This yields the estimate in the proposition with $\Pi=\Pi_0$ and $c^{-1}\,=\,2\tilde{c}N$ 
as soon as $CN(1+E)^{-\und}$ is less than $\und.$ 
\end{proof}

This proposition has the following useful corollary. 
\begin{coro}\label{LowerCor}
Let $a_\tau$ be an analytically dependent family of non-negative symbols of order $0$ such that $\left \{ a_0(x,\xi) > \eps \right \}$ geometrically 
controls $(S^*X,g_0)$ then there exist positive $\kappa,\tau_0$ and $E_0$ such that, 
for any $|\tau|\leq \tau_0$ and any $u$ eigenfunction of $\Delta_\tau$ with energy greater 
than $E_0$ then 
$$ 
\langle A_\tau u,u\rangle_{0}\geq \kappa \|u\|_\tau^2,
$$
where $A_\tau$ denotes $\op(a_\tau).$
\end{coro}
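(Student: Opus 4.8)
The plan is to deduce Corollary \ref{LowerCor} from Proposition \ref{UniformProp} by combining the uniform non-concentration estimate with the operator-positivity of the quantization $\op$. First I would apply Proposition \ref{UniformProp} to the open set $\U=\{a_0>\eps\}$, which geometrically controls $(S^*X,g_0)$ by hypothesis. This produces $\tau_0$, $E_0$, a zeroth-order operator $\Pi$ with symbol $\pi$ satisfying $0\le\pi\le1$ and $\supp(\pi)\subset\U$, and a constant $c>0$ such that $\|\Pi u\|_\tau\ge c\|u\|_\tau$ for all $|\tau|\le\tau_0$ and all eigenfunctions $u$ of $\Delta_\tau$ with energy at least $E_0$.

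Next I would compare the symbol $a_\tau$ with a scalar multiple of $|\pi|^2$. Since $\pi$ is supported in $\U=\{a_0>\eps\}$ and $0\le\pi\le1$, we have the pointwise inequality $\eps\,\pi(x,\xi)^2\le a_0(x,\xi)$ on $S^*X$, because wherever $\pi\ne0$ we have $a_0>\eps\ge\eps\pi^2$. Hence $a_0-\eps\pi^2$ is a non-negative symbol of order $0$. By the positivity property of the quantization procedure, $\op(a_0-\eps\pi^2)=\op(a_0)-\eps\op(\pi^2)$ is a non-negative operator modulo a pseudodifferential operator of order $-1$ (two quantizations of the same symbol differ by order $-1$, and $\op$ of a non-negative symbol is non-negative). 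Therefore, for an eigenfunction $u$ of $\Delta_\tau$ with eigenvalue $E$, the error term contributes at most $C(1+E)^{-\und}\|u\|_\tau^2$, so
\begin{equation*}
\langle A_0 u,u\rangle_0\,\ge\,\eps\,\langle\op(\pi^2)u,u\rangle_0-C(1+E)^{-\und}\|u\|_\tau^2.
\end{equation*}
A further application of the positivity and symbol-calculus shows $\op(\pi^2)-\Pi^*\Pi$ is of order $-1$ (its symbol is $|\pi|^2-|\pi|^2=0$ to leading order), so $\langle\op(\pi^2)u,u\rangle_0\ge\|\Pi u\|_0^2-C'(1+E)^{-\und}\|u\|_\tau^2\ge(c^2-C'(1+E)^{-\und})\|u\|_\tau^2$ using Proposition \ref{UniformProp} and the equivalence of the norms $\|\cdot\|_\tau$ from Remark \ref{uniform}. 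Combining, $\langle A_0 u,u\rangle_0\ge(\eps c^2-C''(1+E)^{-\und})\|u\|_\tau^2$, which exceeds $\tfrac{\eps c^2}{2}\|u\|_\tau^2$ once $E$ is large enough; this gives the statement for $\tau=0$ with $\kappa=\eps c^2/2$.

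Finally I would upgrade from $\tau=0$ to all $|\tau|\le\tau_0$. The only $\tau$-dependent object in the inequality $\langle A_\tau u,u\rangle_0\ge\kappa\|u\|_\tau^2$ is the operator $A_\tau=\op(a_\tau)$ and the norm $\|\cdot\|_\tau$; the quantization is performed with respect to the fixed metric $g_0$, so the positivity arguments above go through verbatim with $a_0$ replaced by $a_\tau$, provided $a_\tau$ still dominates $\eps'\pi^2$ for some $\eps'>0$ uniformly in $\tau$. Since $a_\tau\to a_0$ analytically (hence uniformly on the compact $S^*X$ together with derivatives) and $a_0>\eps$ on the compact set $\supp(\pi)$, by shrinking $\tau_0$ if necessary we get $a_\tau>\eps/2$ on $\supp(\pi)$ for all $|\tau|\le\tau_0$, so $a_\tau-\tfrac{\eps}{2}\pi^2\ge0$. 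The symbol-calculus error bounds are uniform in $\tau$ because the family $a_\tau$ is analytic, so the same chain of inequalities yields $\langle A_\tau u,u\rangle_0\ge\kappa\|u\|_\tau^2$ for all $|\tau|\le\tau_0$ and all eigenfunctions of energy at least some enlarged $E_0$, with $\kappa=\eps c^2/4$. The main obstacle is the bookkeeping of the order $-1$ error terms: one must check that all the constants arising from the symbol calculus (differences of quantizations, $\op(\pi^2)$ versus $\Pi^*\Pi$) are bounded from $L^2$ to $H^1$ uniformly in $\tau$, which uses analyticity of the family, and that when $u$ is an eigenfunction the $H^{-1}$-to-$L^2$ gain really does produce the factor $(1+E)^{-1/2}$ as in the preceding lemma.
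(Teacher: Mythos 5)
Your proposal follows essentially the same route as the paper's proof: apply Proposition \ref{UniformProp} to $\U=\{a_0>\eps\}$, shrink $\tau_0$ so that $a_\tau\geq\eps/2$ on $\supp(\pi)$, use the pointwise inequality $a_\tau\geq(\eps/2)\pi^2$ together with positivity of the quantization $\op$, replace $\op(\pi^2)$ by $\Pi^*\Pi$ up to an order $-1$ remainder, control all remainders by $(1+E)^{-1/2}$ using that $u$ is an eigenfunction and that the relevant norms are uniformly equivalent in $\tau$, and finally invoke the non-concentration estimate. The only cosmetic difference is that you write the single symbol inequality $a_\tau-(\eps/2)\pi^2\geq 0$ and quantize it once, whereas the paper first decomposes $A_\tau=A_\tau\Pi^2+A_\tau(\Id-\Pi^2)$ and applies positivity separately to $a_\tau(1-\pi^2)\geq 0$ and to $a_\tau\pi^2\geq(\eps/2)\pi^2$; the content is identical.
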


\begin{remk}
In the following proof, we will use several times the fact that, for any symbols 
of order $0$, $\op(bc)-\op(b)\circ\op(c)$ is a pseudodifferential operator 
of order $-1.$ We will denote by $R$ any operator arising from this 
kind of operation. Observe that if $b$ and $c$ depend analytically on $\tau$, then 
$R$ also depends analytically on $\tau$.
\end{remk}

\begin{proof}
Let $\tau_0, \tilde{E}_0$ and $\Pi$ be given by applying Proposition \ref{UniformProp} to 
$\U \,=\,\left\{ a_0(x,\xi) > \eps \right\}.$ If needed, we restrict $\tau_0$ so that, 
for any $|\tau|\leq \tau_0$, $a_\tau$ is bounded below by $\frac{\varepsilon}{2}$ on $\U.$
Observe that 
\begin{eqnarray*}
\langle A_\tau u,u\rangle_0 & = & \langle A_\tau \Pi^2 u,u\rangle_0 + \langle A_\tau(\Id-\Pi^2)u,u\rangle_0 \\
& =& \langle \op(a_\tau \pi^2) u,u\rangle_0 + \langle \op(a_\tau (1-\pi^2)u,u\rangle_0 +\langle R_\tau u,u\rangle_0 \\
&\geq & \langle \op(a_\tau \pi^2) u,u\rangle_0 +\langle R_\tau u,u\rangle_0,
\end{eqnarray*}
where the last inequality holds because $\pi(x,\xi)\leq 1$ and we have chosen a positive quantization. 
Since the support of $\pi$ is included in $\U$ we have 
$$\forall (x,\xi)\in S^*X,~ a_\tau (x,\xi)\pi^2(x,\xi) \geq \frac{\eps}{2} \pi^2(x,\xi),$$ 
which is quantized into

\begin{eqnarray*}
\langle \op(a_\tau \pi^2)u,u\rangle_0 & \geq & \frac{\eps}{2} \langle \op(\pi^2) u,u \rangle_0 \\
	& \geq & \frac{\eps}{2} \| \Pi u\|_0^2 +\langle R_\tau u,u\rangle_0,
\end{eqnarray*}	
with another remainder operator $R_\tau$.
Putting these two inequalities together, we obtain 
\begin{equation}
\langle Au,u\rangle_0 \geq \frac{\eps}{2} \| \Pi u\|_0^2 +\langle R_\tau u,u\rangle_0.
\end{equation}
Since $R$ is of order $-1$ and the norms on $H^{-1}$ associated with $(g_\tau)_{|\tau|\leq \tau_0}$ 
are uniformly equivalent, there exists some uniform $C$ such that  
$$ \left|\langle R_\tau u,u\rangle_0 \right | \leq C E^{-\und} \|u\|_\tau.$$
Using Proposition \ref{UniformProp}, for $|\tau|\leq \tau_0$ and $E\geq E_0,$ we have 
$\| \Pi u\|_0 \geq c_1 \| \Pi u\|_\tau \geq c_2 \| u \|_\tau$ so that we get 
\begin{equation}
\langle A_\tau u,u\rangle \geq \left( c\frac{\eps}{2} -CE^{-\und}\right) \cdot \| u\|_\tau^2. 
\end{equation}
The corollary then holds with $\kappa = \frac{c\eps}{4}$ as soon as $E$ is large enough so that 
$CE^{-\und} \leq \displaystyle \frac{c\eps}{4}$. 
\end{proof}												
 
\subsection{Proof of Theorem \ref{IREthm}}
We consider normalized eigenbranches and we start from the formula 
for the derivative of the eigenfunctions : (in the following equation, everything 
except the constant $C$ depends on $\tau$) 
$$
\frac{d}{d\tau} \ln(E) \, \geq \, \langle A u,u\rangle\,-\,C E^{-\und}. 
$$
Under the assumptions of the theorem, we may use proposition \ref{UniformProp} 
and its corollary. This provides us with $\kappa>0$, $\tau_0$ and $E_0$ such that 
\begin{gather}
\nonumber \forall \, |\tau|\leq \tau_0,~~\forall \, E(\tau)\geq E_0,\\
\label{IncreasingBranches}\frac{d}{d\tau} \ln(E) \geq \kappa.
\end{gather}

The rest of the argument is merely reproduced from \cite{Hass}.\\
 
Denote by 
$R_M(E) \,=\,\displaystyle \int_{-\tau_0}^{\tau_0} R_M(t,E) \,dt.$ 
A straightforward manipulation of the integral yields : 
\begin{equation}\label{Manip}
R_M(E)\,=\, \sum_{j} \mbox{Leb}\{ \tau~|~E_j(\tau)\in (E-M,E+M]\,\},
\end{equation}
where $\mbox{Leb}$ denotes the Lebesgue measure.
We consider now $E$ such that $E-M\geq E_0.$ According to \refeq{IncreasingBranches}, 
any eigenbranch that crosses the energy $E-M$ has to do it in an increasing manner. 
This implies that any eigenbranch for which the set $I_j\,:=\,\{ t~|~E_j(\tau) \in (E-M,E+M]\,\}$ 
is not empty satisfies $E_j(-\tau_0)\leq E+M $ and $E_j(\tau_0)\geq E-M.$ In particular, 
the sum in \refeq{Manip} has only a finite number of positive terms, and moreover, 
this number is bounded by $N_{-\tau_0}(E+M)$  which is bounded by $CE^{\frac{d}{2}}$ 
according to Weyl's law. 
Estimate \refeq{IncreasingBranches} also implies that $I_j$ is an interval whose extremities 
we denote by $\tau_j^-$ and $\tau_j^+$ \footnote{ We do not specify whether 
$I_j$ is open or closed at each of its extremity since this fact depends on whether $\tau_j^{\pm}$  
coincide or not with $\pm\tau_0,$ and it is irrelevent when computing the Lebesgue measure.}.
Integrating \refeq{IncreasingBranches} we obtain 
\begin{eqnarray*}
\kappa (\tau_j^+-\tau_j^-) & \leq & \ln\left( \frac{E_j(\tau_j^+)}{E_j(\tau_j^-)}\right )\\
& \leq & \ln \left( \frac{E+M}{E-M}\right )\\ 
& \leq & \frac{C}{E},  
\end{eqnarray*}
as soon as $E$ is large enough (indepently of $j$). 
This bound each term of the sum in \refeq{Manip}, since the number of terms is also bounded explicitly 
using Weyl's law, we obtain 
$$
R_M(E) \leq C\cdot E^{\frac{d}{2}-1}.
$$
\hfill $\square$

\begin{remk}
The fact that $a_0$ is non-negative implies in particularly that the 
derivative of $vol_{g_\tau}X$ is increasing. In particular, the 
variation cannot be volume-preserving.
\end{remk}
  
\begin{remk}
Corollary \refeq{LowerCor} implies that for any semiclassical measure associated 
with $\Delta_\tau$, the following lower bound holds :
$$ \int a d\mu \geq \kappa.$$
If we replace the control assumption in Thm.\refeq{IREthm} by the following (weaker) assumption :
\begin{gather*} 
\forall \,\tau,~\forall \,\mu ~\mbox{semiclassical measure associated with}~\Delta_\tau,\\
\int a d\mu >0,
\end{gather*}
then, following the same lines of proof, we would get a 'almost-everywhere' result 
in the spirit of \cite{Hass}.
\end{remk}

\begin{remk} 
This result can probably be extended to manifolds with boundary with the precaution that, 
in this case we would have to control the possible escape of the 
mass to the boundary.
\end{remk}
\noindent {\bf Examples :}
\begin{enumerate}
\item A natural variation to look at is to simply take $g_\tau=\exp(\tau)g_0.$ 
In this case the control assumption is trivially satisfied so that we get 
a uniform integrated remainder. However, this is not a new result since 
looking at a fixed energy interval and dilating the volume is equivalent to 
fixing the volume and dilating the energy interval so that the 
integrated remainder is actually a regularized remainder 
in Weyl's law. And it is known that integrating the 
remainder in Weyl's law with respect to the energy yields a much better estimate than the 
universal remainder term.
\item For a more general conformal variation on a surface ($d=2$) we have  
$g_\tau:= (1+\tau f(x))^2g_0,$ the eigenvalue problem can be put in the form 
$$
\int_X \nabla_0 u \nabla_0 \phi\, dvol_{g_0} \,=\, E\int_X u\phi (1+\tau f(x))^2 \,dvol_{g_0}.
$$
This gives $a_0(x,\xi)=2f(x).$ Choosing $f$ non-negative and 
supported in a control region yields examples where the theorem applies.
\item
On the contrary, the variation considered in \cite{Hass} on a 
Donnelly surface doesn't satisfy the control estimate because 
of the cylinder of periodics orbits and of periodic geodesic 
in the hyperbolic wings. 
\end{enumerate}

\newpage

\section*{Appendix : Uniform Hadamard Parametrix}

The aim of this appendix is to give a (rather sketchy) justification of 
the uniform estimate \refeq{UniformSmoothing}. 

Let $X$ be a smooth manifold without boundary and $g_\tau$ a 
real-analytic family of riemannian metrics. 

Denote by $U_\tau(s)$ the propagator of the half-wave equation 
associated with $\Delta_\tau$ :
$$
U_\tau(s) \,=\, \exp(is\sqrt{\Delta_\tau}),
$$ 
and denote by $k_\tau(s,x,y)$ its Schwartz kernel. 
This kernel $k_\tau(s,x,y)$ is obtained by restricting 
the Schwartz kernel of $\cos(s\sqrt{\Delta_\tau})$ to 
the region $\sigma>0$ of the cotangent bundle $T^*(\R\times X\times Y)$

We start from the Hadamard parametrix construction as explained 
for instance in \cite{Ber}. Concretely, for $s$ less than some small $\eps$ and 
$x,y$ such that $d_\tau(x,y)\leq 2\eps$, we make the ansatz 
$$
{\tilde k}_N(\tau,s,x,y)\,=\,\sum_1^N u_k(\tau,x,y)T_{\alp_d+k}(s^2-d^2_\tau(x,y)),
$$
in which the $u_k$ are smooth functions of $x,y$ and $T_{\alp+k}(z)$ are oscillatory 
distributions of $\R$ with wave-front in $\{ (z=0,\zeta >0)\}.$ 
The coefficients $u_k$ satisfy some transport equations from 
which it is straightforward that they also depend smoothly on $\tau$. 
We fix a smooth cutoof function $\rho:=\rho(d_\tau(x,y))$ that cuts off in $d_\tau(x,y)\leq 2 \eps$ 
and we let $k_N=\tilde{k}_N\times \rho$ 
A computation then yields that $k_N$ solves the half-wave equation up to 
a remainder term $r_N(\tau,s,x,y)$ which is $\mathcal{C}^{l(N)}$ where $l(N)$ 
tends to infinity with $N.$ Applying Duhamel's 
principle, we get 
$$
k(\tau,s,x,y)-k_N(\tau,s,x,y)\,=\,\int_0^s k(\tau,s-s',x,w)r_N(\tau,s',w,y) ds' dw.
$$ 
Here again, a detailed analysis shows that, for $s\leq \eps$ and for any $N,$  
we finally have 
$$
k(\tau,s,x,y)\,=\, k_N(\tau,s,x,y)+R_N(\tau,s,x,y),
$$
in which everything is smooth in $\tau,$ and $R_N$ is ${\mathcal{C}}^{l(N)}.$ 
Since $k_N$ has a fixed order as a distribution of $(s,x,y),$ applying 
it to a $C^l$ function gives a $C^{l-\beta}$ with a loss $\beta$ that is independant 
of $N.$

We now fix some $T$, and we use the semigroup properties of $U$ to write 
$$
U_\tau(T)\,=\,U_{\tau}(\delta)\circ U_{\tau}(\delta)\circ\cdot\circ U_{\tau}(\delta),
$$ 
with a fixed small $\delta.$ We now plug into this expression the decomposition 
$U_\tau=K_N+R_N.$ Expanding everything gives a large sum in which each term 
is a composition of $K_N$ factors with $R_N$ factors. We address the regularity 
of each term. Each term has regularity  $N-p\beta$ where $p$ is the number of 
$K_N$ factors. Thus any term except the one with only $K_N$ factors may be 
made as regular as wanted by choosing $N$ large enough. Moreover, everything is 
regular in $\tau.$ 

It remains to address the term with only $K_N$ factors. Each $K_N$ is 
explicit and this composition is adressed using stationary phase. 
It is somewhat tedious in the general case because for instance 
of conjugate points. However it is much simpler when we have an initial and final 
microlocal cutoff that destroys everything in the wave-front set. Indeed 
a uniform non-stationary phase then applies that yields a uniform 
$C^m$ bound for any $m.$

\end{document}